\newtheorem{theorem}{Theorem}[section]
\newtheorem{definition}{Definition}[section]
\newtheorem{example}{Example}[section]
\newtheorem{remark}{Remark}[section]
\newtheorem{axiom}{Axiom}
\tikzstyle{vecArrow} = [thick, decoration={markings,mark=at position
\tikzstyle{edgeArrow} = [semithick, black,line width=1.4pt]
\tikzstyle{edgeArrowR} = [semithick, black,line width=3pt]
\tikzstyle{edgeArrowD} = [thick, decoration={markings,mark=at position
\tikzstyle{vecArrow2} = [thick, black, double distance=1.4pt, shorten >= 1.5pt,
\tikzstyle{innerWhite} = [semithick, white,line width=1.4pt, shorten >= 4.5pt]
\def\slashedarrowfill@#1#2#3#4#5{%
  $\m@th\thickmuskip0mu\medmuskip\thickmuskip\thinmuskip\thickmuskip
  \relax#5#1\mkern-7mu%
  \cleaders\hbox{$#5\mkern-2mu#2\mkern-2mu$}\hfill
  \mathclap{#3}\mathclap{#2}%
  \cleaders\hbox{$#5\mkern-2mu#2\mkern-2mu$}\hfill
  \mkern-7mu#4$%
}
\def\rightslashedarrowfill@{%
  \slashedarrowfill@\relbar\relbar\mapstochar\rightarrow}
\newcommand\xslashedrightarrow[2][]{%
  \ext@arrow 0055{\rightslashedarrowfill@}{#1}{#2}}
\newcommand{\group}[1]  {
  \mathbb{#1}
}
\newcommand{\catl}[1]  {
  \mathbb{#1}
}
\newcommand{\catw}[1]  {
  \mathbf{#1}
}
\newcommand{\word}[1]  {
  \mathit{#1}
}
\newcommand{\mor}[3]  {
  #1 \colon #2 \rightarrow #3
}
\newcommand{\tuple}[1]  {
	\langle #1 \rangle
}
\newcommand{\id}[1]  {
  \word{id}_{#1}
}
\newcommand{\cont}[1]  {
  \catw{Cont}(\group{#1})
}
\newcommand{\classifying}[1]  {
  \catw{Set}[T]
}
\newcommand{\aut}[1]  {
  \mathit{Aut}(#1)
}
\newcommand{\struct}[1]  {
  \mathcal{#1}
}
\begin{document}

\title[On amenability of constraint satisfaction problems]{On amenability of constraint satisfaction problems}         


\author{Michal R.~Przybylek}
\affiliation{
  \department{Faculty of Mathematics, Informatics and Mechanics}              
  \institution{University of Warsaw}            
  \city{Warsaw}
  \country{Poland}                    
}
\email{mrp@mimuw.edu.pl}          

\begin{abstract}
The authors of \cite{DBLP:conf/lics/KlinKOT15} showed that a constraint satisfaction problem (CSP) defined over rational numbers with their natural ordering has a solution if and only if it has a definable solution. Their proof uses advanced results from topology and modern model theory. The aim of this paper is threefold. (1) We give a simple purely-logical proof of their theorem and show that the advanced results from topology and model theory are not needed; (2) we introduce an intrinsic characterisation of the statement ``definable CSP has a solution iff it has a definable solution'' and investigate it in general intuitionistic set theories (3) we show that the results from modern model theory are indeed needed, but for the implication reversed: we prove that ``definable CSP has a solution iff it has a definable solution'' holds over a countable structure if and only if the automorphism group of the structure is extremely amenable.
\end{abstract}

\begin{CCSXML}
<ccs2012>
<concept>
<concept_id>10003752.10003753.10003754</concept_id>
<concept_desc>Theory of computation~Computability</concept_desc>
<concept_significance>500</concept_significance>
</concept>
</ccs2012>
\end{CCSXML}

\ccsdesc[500]{Theory of computation~Constraint and logic programming}

\keywords{set theory with atoms, intuitionistic set theory, constraint satisfaction problem, Ramsey property, extremely amenable group, Boolean prime ideal theorem}  

\maketitle

\section{Introduction}
\label{s:introduction}

Nowadays, there is no longer any question that computer-aided solutions to real-world problems are critical for the industry. Even relatively small problems can have high complexity, what makes them intractable for human beings. Very many real-world decision problems of high complexity can be abstractly specified as constraint satisfaction problems e.g.~hardware verification and diagnosis: \cite{clarke2003sat}, \cite{gotlieb2012tcas}, automated planning and scheduling \cite{do2001planning}, \cite{Fox-1990-13142}, temporal and spatial reasoning \cite{renz2007qualitative}, \cite{bodirsky2007qualitative}, air traffic managment \cite{allignol2012constraint},... to name a few. A constraint satisfaction problem (CSP) can be abstractly defined as a triple $\tuple{D, V, C}$, where:
\begin{itemize}
\item $D$ is the domain
\item $V$ is the set of variables 
\item $C$ is a set of constraints of the form $\tuple{\tuple{x_1, x_2, \cdots, x_k}, R}$, where $x_i \in V$ and $R \subseteq D^k$
\end{itemize}
A solution to this problem is an assignment $\mor{S}{V}{D}$ that satisfies all constraints in $C$, i.e.: for every $\tuple{\tuple{x_1, x_2, \cdots, x_k}, R} \in C$ we have that $R(S(x_1), S(x_2), \cdots, S(x_k))$ holds. Classical and best explored variant of CSP is \emph{finite} CSP --- i.e.~the set of variables, the set of constraints and the domain of the variables are all finite. Although the computational cost of finite CSP is high (i.e.~the general problem is NP-complete), it can be solved in a finite time by a machine\footnote{For a general reference on solving classical CSP see \cite{rossi2006handbook}}. 

Unfortunately, when it comes to problems concerning behaviours of autonomous systems, the classical variant is too restrictive. Such problems can be naturally specified as CSP with infinite sets of variables (corresponding to the states of a system) and infinite sets of constraints (corresponding to the transitions between the states of a system). In recent years, we have witnessed a giant progress in solving infinite variants of CSP. The authors of \cite{bodirsky2006constraint} (see also a survey article \cite{bodirsky2008constraint}) applied algebraic and model-theoretic tools to analyze CSP over infinite domains. This research inspired the Warsaw Logical Group to investigate, so called, locally finite CSP --- i.e.~CSP over finite domain, whose variables and constraints form a definable infinite set of finite arities (see \cite{DBLP:conf/lics/KlinKOT15} and \cite{ochremiak2016extended}). They showed that that CSP defined in the first-order theory of rational numbers with their natural ordering can be solved effectively\footnote{To be more precise, they worked in the maximal tight extension of the theory, see \cite{licsMRP} for more details.}. The key technical observation was a property of definability in rational numbers, which we reformulate as the following axiom.
\begin{axiom}[DEF-CSP]
A definable CSP over a finite domain has a definable solution if and only if it has a solution. 
\end{axiom}
It was further observed in \cite{bodirsky2013decidability} that infinite CSP --- i.e.~CSP whose domains, variables and constraints form infinite definable sets reduce to locally finite CSP (an explicit reduction is given in Section~4 of \cite{DBLP:conf/fsttcs/KlinLOT16}). For this reason, without loss of generality, we can focus on CSP over finite domains. An example of such CSP is the problem of 3-colorability of graphs.

\begin{example}[3-colorability of an infinite graph]\label{e:3col}
Consider the following infinite graph definable over natural numbers $\struct{N}$ with equality:
\begin{eqnarray*}
V &=& \{\tuple{a, b} \in N^2 \colon a \neq b \} \\
E &=& \{\tuple{\tuple{a,b},\tuple{c,d}} \in V \times V \colon (a = d \wedge b \neq c) \vee (a \neq d \wedge b = c)  \}
\end{eqnarray*}
One may wander if this infinite graph is 3-colorable. Figure~\ref{f:3col} gives the negative answer by exhibiting a finite subgraph, which is not 3-colorable.  This problem fits into the framework of CSP as follows: the domain $D = \{Y, G, B\}$ consists of three constants $Y, G, B$, the set of variables $V$ is the set of vertices $V$ of the graph, and the set of constraints is given as $C = \{\tuple{\tuple{x, y}, {\neq_D}} \colon \tuple{x, y} \in E \}$, where ${\neq_D} \subset D \times D$ is the inequality relation on $D$. Observe, that every set definable in $\struct{N}$, can be treated as a set definable in the rational numbers with their natural ordering. Therefore, we can use the machinery of \cite{DBLP:conf/lics/KlinKOT15} to solve such problems effectively. 
\end{example}
\begin{figure}[htb]
    \centering
    \resizebox{0.4\textwidth}{!}{%
    \begin{tikzpicture}[ball/.style = {circle, draw, align=center, anchor=north, inner sep=0}]
\node[ball,text width=0.8cm,fill=blue!30] (ball12) at (2,0) {{$1,2$}};
\node[ball,text width=0.8cm,fill=green!30] (ball51) at (5,0) {{$5,1$}};
\node[ball,text width=0.8cm,fill=yellow!30] (ball25) at (3.5,-1) {{$2,5$}};
\node[ball,text width=0.8cm,fill=green!30] (ball31) at (2,-2) {{$3,1$}};
\node[ball,text width=0.8cm,fill=blue!30] (ball14) at (5,-2) {{$1,4$}};
\node[ball,text width=0.8cm,fill=yellow!30] (ball23) at (0.5,-3) {{$2,3$}};
\node[ball,text width=0.8cm,fill=yellow!30] (ball45) at (6.5,-3) {{$4,5$}};
\node[ball,text width=0.8cm] (ball34) at (3.5,-5) {$3,4$};

\draw[edgeArrow] (ball34) to (ball31);
\draw[edgeArrow] (ball34) to (ball14);

\node[ball,text width=0.8cm,fill=green!30] (ball42) at (2.5,-3.5) {{$4,2$}};
\node[ball,text width=0.8cm,fill=blue!30] (ball53) at (4.5,-3.5) {{$5,3$}};

\draw[edgeArrow] (ball34) to (ball42);
\draw[edgeArrow] (ball34) to (ball53);
\draw[edgeArrow] (ball34) to (ball23);
\draw[edgeArrow] (ball34) to (ball45);
\draw[edgeArrow] (ball12) to (ball51);
\draw[edgeArrow] (ball12) to (ball25);
\draw[edgeArrow] (ball51) to (ball25);
\draw[edgeArrow] (ball12) to (ball31);
\draw[edgeArrow] (ball51) to (ball14);
\draw[edgeArrow] (ball31) to (ball14);
\draw[edgeArrow] (ball25) to (ball42);
\draw[edgeArrow] (ball25) to (ball53);
\draw[edgeArrow] (ball12) to (ball23);
\draw[edgeArrow] (ball51) to (ball45);
\draw[edgeArrow] (ball14) to (ball42);
\draw[edgeArrow] (ball31) to (ball53);
\draw[edgeArrow] (ball23) to (ball42);
\draw[edgeArrow] (ball23) to (ball31);
\draw[edgeArrow] (ball45) to (ball14);
\draw[edgeArrow] (ball45) to (ball53);

\end{tikzpicture} 
    }%
    \caption{Counterexample to 3-colorability.}\label{f:3col}
\end{figure}
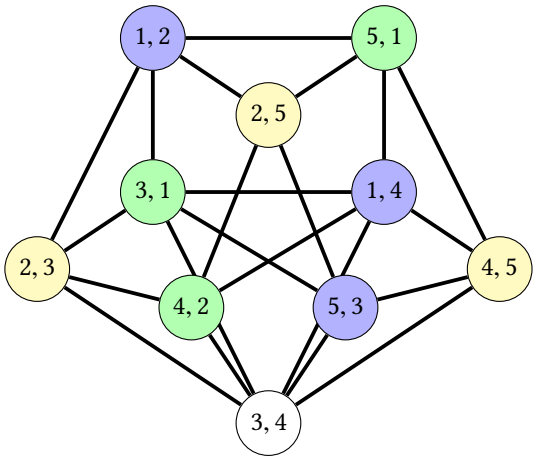

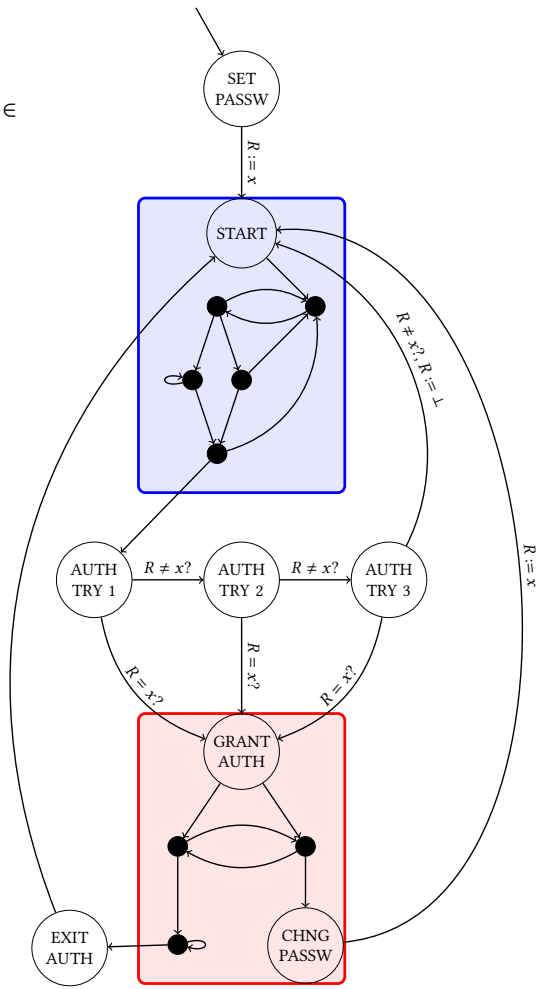
\begin{figure}[htb]
    \centering
    \resizebox{0.5\textwidth}{!}{%
 \begin{tikzpicture}[sloped,ball/.style = {circle, draw, align=center, anchor=north, inner sep=0}]
    
    \draw[blue,ultra thick,rounded corners,fill=blue!10] (0.9,-5.5) rectangle (5.1,-11.5);
    \draw[red,ultra thick,rounded corners,fill=red!10] (0.9,-16) rectangle (5.1,-21.5);
\node[text width=1.4cm] (init) at (2,-1.5) {{}};
\node[ball,text width=1.4cm] (passw) at (3,-2.5) {{SET\\PASSW}};
\node[ball,text width=1.4cm] (start) at (3,-5.5) {{START}};

\draw[thick,->]  (init) to  (passw) ;
\draw[thick,->]  (passw) to  node[midway,above] {$R := x$} (start);

\node[ball,text width=1.4cm] (try1) at (0,-12.5) {{AUTH\\TRY 1}};
\node[ball,text width=1.4cm] (try2) at (3,-12.5) {{AUTH\\TRY 2}};
\node[ball,text width=1.4cm] (try3) at (6,-12.5) {{AUTH\\TRY 3}};

\node[ball,text width=1.4cm] (access) at (3,-16) {{GRANT\\AUTH}};

\draw[thick,->]  (try1) to [bend right] node[midway,sloped,above] {$R = x$?} (access);
\draw[thick,->]  (try1) to  node[midway,sloped,above] {$R \neq x$?} (try2);
\draw[thick,->]  (try2) to  node[midway,sloped,above] {$R \neq x$?} (try3);
\draw[thick,->]  (try2) to  node[midway,sloped,above] {$R = x$?} (access);
\draw[thick,<-]  (access) to [bend right] node[midway,above] {$R = x$?} (try3);

\draw[thick,->]  (try3) to [bend right=50] node[midway,sloped,above] {$R \neq x?$, $R := \bot$} (start);

\node[ball,text width=1.4cm] (exit) at (-0.5,-20) {{EXIT\\AUTH}};
\draw[thick,->]  (exit) to [bend left=35] (start);

\node[ball,text width=0.4cm,fill=black] (a1) at (2.5,-7.5) {{}};
\node[ball,text width=0.4cm,fill=black] (a2) at (4.5,-7.5) {{}};
\node[ball,text width=0.4cm,fill=black] (a3) at (2,-9) {{}};
\node[ball,text width=0.4cm,fill=black] (a4) at (3,-9) {{}};
\node[ball,text width=0.4cm,fill=black] (a5) at (2.5,-10.5) {{}};

\draw[thick,->]  (a4) to (a5);

\draw[thick,->]  (start) to (a2);
\draw[thick,->]  (a1) to (a3);
\draw[thick,->]  (a1) to (a4);
\draw[thick,->]  (a3) to (a5);
\draw[thick,->]  (a4) to (a2);
\draw[thick,->]  (a5) [bend right=40] to (a2);
\draw[thick,->]  (a2) [bend left] to (a1);
\draw[thick,->]  (a1) [bend left] to (a2);


\draw[thick,->]  (a3) to [loop left] (a3);
\draw[thick,->]  (a5) to (try1);

\node[ball,text width=0.4cm,fill=black] (b1) at (1.7,-18.5) {{}};
\node[ball,text width=0.4cm,fill=black] (b2) at (4.3,-18.5) {{}};
\node[ball,text width=0.4cm,fill=black] (b3) at (1.7,-20.5) {{}};

\node[ball,text width=1.4cm] (change) at (4.3,-19.95) {{CHNG\\PASSW}};
\draw[thick,->]  (change) to [bend right=90] node[midway,sloped,above] {$R := x$} (start);

\draw[thick,->]  (access) to (b1);
\draw[thick,->]  (access) to (b2);
\draw[thick,->]  (b1) to (b3);
\draw[thick,->]  (b2) to (change);
\draw[thick,->]  (b3) to (exit);
\draw[thick,->]  (b2) [bend left] to (b1);
\draw[thick,->]  (b1) [bend left] to (b2);


\draw[thick,->]  (b3) to [loop right] (b3);

\end{tikzpicture} 
    }%
    \caption{A register machine that models access control to some parts of the system.}\label{f:register:machine}   

  \end{figure}
An important source of infinite graphs come from finite memory machines in the sense of Kaminski, Michael and Francez \cite{kaminski1994finite}.

\begin{example}[Access-control register machine]\label{e:register:machine}
Figure~\ref{f:register:machine} represents a register machine with one register $R$. The machine starts in state ``SET PASSW'', where it awaits for the user to provide a password $x$. This password is then stored in register $R$, and the machine enters state ``START''. Inside the blue rectangle the machine can perform actions that do not require authentication, whereas the actions that require authentication are presented inside the red rectangle. The red rectangle can be entered by the state ``GRANT AUTH'', which can be accessed from one of three authentication states. In order to authorise, the machine moves to state ``AUTH TRY 1'', where it gets input $x$ from the user. If the input is the same as the value previously stored in register $R$, then the machine enters state ``GRANT AUTH''. Otherwise, it moves to state ``AUTH TRY 2'' and repeats the procedure. Upon second unsuccessful authorisation, the machine moves to state ``AUTH TRY 3''. But  if the user provides a wrong password when the machine is in state ``AUTH TRY 3'', the register $R$ is erased (replaced with a value that is outside of the user's alphabet) --- preventing the machine to reach any of the states from the red rectangle.
Inside the red rectangle any action that requires authentication can be performed. For example, the user may request the change of the password.
Observe that in contrast to finite automata, the graph of possible configurations in a register machine is infinite. Nonetheless, we can formulate many properties of such graphs as CSP problems over natural numbers with equality and solve them effectively. 
\end{example}

\begin{remark}
Every constraint satisfaction problem can be presented as a pair of relational structures $\struct{V}$, $\struct{D}$ over the same relational signature $\Sigma$. This signature $\Sigma$ consists of a pair $\tuple{R, k}$ for every relation $R \subseteq D^k$ from a constraint $\tuple{\tuple{x_1, x_2, \dotsc, x_k}, R} \in C$. The interpretation of symbol $R/k \in \Sigma$ in $\struct{V}$ is $R^V(x_1, x_2, \dotsc, x_k) \Leftrightarrow \tuple{\tuple{x_1, x_2, \dotsc, x_k}, R} \in C$, and the interpretation in $\struct{D}$ is the relation $R$ itself. Moreover, a solution $\mor{S}{V}{D}$ to the CSP is a homomorphism from $\struct{V}$ to $\struct{D}$.
\end{remark}

To understand these results, we have to recall some basic concepts from model and set theory. We shall do this in Section~\ref{s:zfa}. The aim of this paper is to reverse the theorem stated as Axiom~DEF-CSP and give the full characterisation of set theories with atoms where Axiom~DEF-CSP holds. But we shall do much more: in Section~\ref{s:boolean:toposes} we reformulate Axiom~DEF-CSP as an intrinsic Axiom of any Boolean topos (Axiom~CSP) and show that it is equivalent to another, well-known, axiom: Boolean prime ideal theorem. Then in Subsection~\ref{ss:characterisation} we show that Axiom~CSP holds in $\cont{\aut{A \sqcup \word{A_0}}}$ for every finite $A_0 \subset A$ if and only if the automorphism group of $\struct{A}$ is extremely amenable. By the transfer principle we  conclude that this is equivalent to Axiom~DEF-CSP in $\catw{ZFA}(\struct{A})$. In Section~\ref{s:non:boolean:toposes} we investigate Axiom~CSP in non-Boolean toposes pointing out many obstacles to the equivalence between Boolean prime ideal theorem and possible formulations of Axiom~CSP.

\section{Set theory with atoms}
\label{s:zfa}

For any structure $\struct{A}$ we can build a von Neumann-like hierarchy of sets with elements from $\struct{A}$ \cite{mostowski1939unabhangigkeit}, \cite{halbeisen2017combinatorial}. The elements of $A$ will be thought of as ``atoms''.

\begin{definition}[The cumulative hierarchy of sets with atoms]
Let $\struct{A}$ be an algebraic structure with universe $A$. Consider the following sets defined by transfinite recursion:
\begin{itemize}
    \item $V_0(\struct{A}) = A$
    \item $V_{\alpha + 1}(\struct{A}) = \mathcal{P}(V_{\alpha}(\struct{A})) \cup V_{\alpha}(\struct{A})$
    \item $V_{\lambda}(\struct{A}) = \bigcup_{\alpha < \lambda} V_{\alpha}(\struct{A})$ if $\lambda$ is a limit ordinal
\end{itemize}
Then the cumulative hierarchy of sets with atoms $\struct{A}$ is defined as $V(\struct{A}) = \bigcup_{\alpha \colon \mathit{Ord}} V_{\alpha}(\struct{A})$.
\end{definition}

Observe, that the universe $V(\struct{A})$ carries a natural action $\mor{(\bullet)}{\aut{\struct{A}} \times V(\struct{A})}{V(\struct{A})}$ of the automorphism group $\aut{\struct{A}}$ of structure $\struct{A}$ --- it is just applied pointwise to the atoms of a set. If $X \in V(\struct{A})$ is a set with atoms then by its set-wise stabiliser we shall mean the set: $\aut{\struct{A}}_X = \{\pi \in \aut{\struct{A}} \colon \pi \bullet X = X \}$; and by its point-wise stabiliser the set: $\aut{\struct{A}}_{(X)} = \{\pi \in \aut{\struct{A}} \colon \forall_{x \in X} \pi \bullet x = x \}$. Moreover, for every $X$, these sets inherit a group structure from $\aut{\struct{A}}$.

There is an important sub-hierarchy of the cumulative hierarchy of sets with atoms $\struct{A}$, which consists of ``symmetric sets'' only. To define this hierarchy, we have to equip $\aut{\struct{A}}$ with a structure of a topological group.

\begin{definition}[Symmetric set]
 A set $X \in V(\struct{A})$ is \emph{symmetric} if the set-wise stabilisers of all of its descendants $Y$ is an open set (an open subgroup of $\aut{\struct{A}}$), i.e.~for every $Y \in^* X$ we have that: $\aut{\struct{A}}_{Y}$ is open in $\aut{\struct{A}}$, where ${\in^*}$ is the reflexive-transitive closure of the membership relation ${\in}$.
\end{definition}

Of a special interest is the topology on $\aut{\struct{A}}$ inherited from the product topology on $A^A$. We shall call this topology the canonical topology on $\aut{\struct{A}}$. In this topology, a subgroup $\group{H}$ of $\aut{\struct{A}}$ is open if there is a finite $A_0 \subseteq A$ such that: $\aut{\struct{A}}_{(A_0)} \subseteq \group{H}$, i.e.: group $\group{H}$ contains a pointwise stabiliser of some finite set of atoms.

\begin{definition}[Sets with atoms]
The sub-hierarchy of $V(\struct{A})$ that consists of symmetric sets according to the canonical topology on $\aut{\struct{A}}$ will be denoted by $\catw{ZFA}(\struct{A})$.
\end{definition}

Here are some standard examples of sets with atoms.

\begin{example}[The basic Fraenkel-Mostowski model]\label{e:first:zfa}
Let $\struct{N}$ be the structure of natural numbers with equality. We call $\catw{ZFA}(\struct{N})$ the basic Fraenkel-Mostowski model of set theory with atoms. Observe that $\aut{\struct{N}}$ is the group of all bijections (permutations) on $N$. 
The following are examples of sets in $\catw{ZFA}(\struct{N})$:
\begin{itemize}
\item all sets without atoms $N$, e.g.~$\emptyset, \{\emptyset\}, \{\emptyset, \{\emptyset\}, \dotsc\}, \dotsc$
\item all finite subsets of $N$, e.g.~$\{0\}, \{0,1,2,3\}, \dotsc$
\item all cofinite subsets of $N$, e.g.~$\{1, 2, 3, \dotsc\}, \{4, 5, 6, \dotsc\}, \dotsc$
\item $N\times N$
\item $\{\tuple{a,b} \in N^2 \colon a \neq b\}$
\item $N^* = \bigcup_{k\in N} N^k$ 
\item $\mathcal{K}(N) = \{N_0 \colon N_0 \subseteq N, \textit{$N_0$ is finite}\}$
\item $\mathcal{P}_s(N) = \{N_0 \colon N_0 \subseteq N, \textit{$N_0$ is symmetric}\}$
\end{itemize}
Here are examples of sets in $V(\struct{N})$ which are not symmetric:
\begin{itemize}
\item $\{0, 2, 4, 6, \dotsc\}$
\item $\{\tuple{n, m} \in N^2 \colon n \leq m\}$
\item the set of all functions from $N$ to $N$
\item $\mathcal{P}(N) = \{N_0 \colon N_0 \subseteq N\}$
\end{itemize}
\end{example}

\begin{example}[The ordered Fraenkel-Mostowski model]\label{e:ordered:zfa}
Let $\struct{Q}$ be the structure of rational numbers with their natural ordering. We call $\catw{ZFA}(\struct{Q})$ the ordered Fraenkel-Mostowski model of set theory with atoms. Observe that $\aut{\struct{Q}}$ is the group of all order-preserving bijections on $Q$.
All symmetric sets from Example~\ref{e:first:zfa} are symmetric sets in $\catw{ZFA}(\struct{Q})$ when $N$ is replaced by $Q$. Here are some further symmetric sets:
\begin{itemize}
\item $\{\tuple{p, q} \in Q^2 \colon p \leq q\}$
\item $\{\tuple{p, q} \in Q^2 \colon 0 \leq p \leq q \leq 1 \}$
\end{itemize}
\end{example}

Observe that the group $\aut{\struct{A}}_{(A_0)}$ is actually the group of automorphism of structure $\struct{A}$ extended with constants $A_0$, i.e.: $\aut{\struct{A}}_{(A_0)} = \aut{\struct{A} \sqcup A_0}$. Then a set $X \in V(\struct{A})$ is symmetric if and only if there is a finite $A_0 \in A$ such that $\aut{\struct{A} \sqcup A_0} \subseteq \aut{\struct{A}}_X$ and the canonical action of topological group $\aut{\struct{A} \sqcup A_0}$ on discrete set $X$ is continuous. A symmetric set is called $A_0$-equivariant (or equivariant in case $A_0 = \emptyset$) if $\aut{\struct{A} \sqcup A_0} \subseteq \aut{\struct{A}}_X$. Therefore, the (non-full) subcategory of $\catw{ZFA}(\struct{A})$ on $A_0$-equivariant sets and $A_0$-equivariant functions is equivalent to the category $\cont{\aut{\struct{A} \sqcup \word{A_0}}} \subseteq \catw{Set}^{\aut{\struct{A} \sqcup \word{A_0}}}$ of continuous actions of the topological group $\aut{\struct{A} \sqcup A_0}$ on discrete sets.

\begin{example}[Equivariant sets]
In the basic Fraenkel-Mostowski model:
\begin{itemize}
\item all sets without atoms are equivariant
\item all finite subsets $N_0 \subseteq N$ are $N_0$-equivariant
\item all finite subsets $N_0 \subseteq N$ are $(N \setminus N_0)$-equivariant
\item $N\times N, N^{(2)}, \mathcal{K}(N), \mathcal{P}_S(N)$ are equivariant
\end{itemize}
\end{example}

\begin{definition}[Definable set]
We shall say that an $A_0$-equivariant set $X \in \catw{ZFA}(\struct{A})$ is definable if its canonical action has only finitely many orbits, i.e.~if the relation $x \equiv y \Leftrightarrow \exists_{\pi \in \aut{\struct{A} \sqcup \word{A_0}}} \; x = \pi \bullet y$ has finitely many equivalence classes.
\end{definition}

For an open subgroup $\group{H}$ of $\aut{\struct{A}}$ let us denote by $\aut{\struct{A}}/\group{H}$ the quotient set $\{\pi \group{H} \colon \pi \in \aut{\struct{A}} \}$. This set carries a natural continuous action of $\aut{\struct{A}}$, i.e.~for $\sigma, \pi \in \aut{\struct{A}}$, we have $\sigma \bullet \pi \group{H} = (\sigma \circ \pi) \group{H}$. All transitive (i.e.~single orbit) actions of $\aut{\struct{A}}$ on discrete sets are essentialy of this form (see for example Chapter~III, Section~9 of \cite{maclane2012sheaves}). Therefore, equivariant definable sets are essentially finite unions of sets of the form $\aut{\struct{A}}/\group{H}$. Moreover, if structure $\struct{A}$ is $\omega$-categorical, then equivariant definable sets are the same as sets definable in the first order theory of $\struct{A}$ extended with elimination of imaginaries \cite{licsMRP}.

\begin{definition}[Ramsey property]
A structure $\struct{A}$ has a Ramsey property if for every open subgroup $\group{H}$ of $\aut{\struct{A}}$, every function $\mor{f}{\aut{\struct{A}}/\group{H}}{\{1, 2, \dotsc, k\}}$ and every finite set $C \subseteq \aut{\struct{A}}/\group{H}$ there is $\pi \in \aut{\struct{A}}$ such that $f$ is constant on $g\bullet C$, i.e.~there exists $0 \leq i \leq k$ such that for all $c \in C$ we have $f(\pi \bullet c) = i$. 
\end{definition}

The authors of \cite{DBLP:conf/lics/KlinKOT15} working in the ordered Fraenkel-Mostowski model $\catw{ZFA}(\struct{Q})$, showed that an equivarian definable constraint satisfaction problem has a solution if and only if it has an equivariant definable solution. A careful inspection of their proof shows that this result can be strengthen to all equivariant sets. The proof is based their results on a recently discovered result in topological dynamic \cite{pestov1998free}. We shall show that this advanced result is not needed at all. Before that, let us recall a very old problem about the independence of the Axiom of Choice from other axioms.

\begin{definition}[Ideal]
\label{d:ideal}
Let $\struct{B}$ be a Boolean algebra. An ideal in $\struct{B}$ is a proper subobject $I \subset B$ satisfying the following conditions:
\begin{itemize}
\item if $a, b \in I$ then $a \vee b \in I$
\item if $a \in I$ then for every $b \in B$ such that $b \leq a$ we have that $b \in I$ 
\end{itemize}
\end{definition}

\begin{definition}[Prime ideal]
\label{d:prime:ideal}
Let $I$ be an ideal in $\struct{B}$. We say that $I$ is prime if for every $b \in B$ either $b \in I$ or $\neg b \in I$.
\end{definition}

The Boolean Prime Ideal Theorem (BPIT) states that every ideal in Boolean algebra can be extended to a prime ideal. It is a routine to check that BPIT is follows from the Axiom of Choice \cite{jech2008axiom}, \cite{howard1998consequences}. It was a long-standing open problem whether the reverse implication holds as well. In 1964 Halpern \cite{halpern1964independence} used a model of ZFA over the rational numbers with the canonical ordering (nowadays called the ordered FM model $\catw{ZFA}(\struct{Q})$) to prove that the Axiom of Choice is not a consequence of BPIT in set theory with atoms. That is, he showd that in $\catw{ZFA}(\struct{Q})$ the Axiom of Choice fails badly, but BPIT holds. This result was later amplified in \cite{halpern1971boolean} to give the first proof that the Axiom of Choice is not a consequence of BPIT in ZF (without atoms).

\begin{remark}
An ideal $I$ in $\struct{B}$ can be represented by a homomorphism $\mor{h_I}{\struct{B}}{\struct{B}'}$ to a Boolean algebra $\struct{B}'$, i.e.~$I = h_I^{-1}(0)$. A prime ideal is an ideal that can be represented by a homomorphism to $2$ equipped with the usual Boolean algebra structure. Therefore, an ideal $I$ in $\struct{B}$ can be extended to a prime ideal $P$ iff $\struct{B}'$ has a prime ideal $J$. In this case, $h_P = h_J \circ h_I$. This means, that BPIT is equivalent to the statement that every non-trivial Boolean algebra has a prime ideal. We shall use this characterisation in Section~\ref{s:boolean:toposes}.
\end{remark}

It is the result of Halpern that we use to prove that in the cumulative hierarchy $V(\struct{Q})$ the following holds: ``an $X$-equivariant CSP has a solution if and only if it has an $X$-equivariant solution''. This may be formulated as ZFA-Axiom~CSP.
\begin{axiom}[ZFA-Axiom~CSP]
An $X$-equivariant CSP has an $X$-equivariant solution if and only if it has a solution. 
\end{axiom}
But, in fact, we do more. First, we reformulate ZFA-Axiom~CSP as an intrinsic property of a topos and call it Axiom~CSP. Then, we show that in Boolean toposes Axiom~CSP is actually equivalent to BPIT. In particular, for every set of atoms $\struct{A}$ we have that $\catw{ZFA}(\struct{A})$ satisfies Axiom~CSP if and only if it satisfies BPIT. This will give (1) and (2) from the abstract, with the one caveat: equivariance is not an intrinsic property of $\catw{ZFA}(\struct{A})$, therefore we have to state Axiom~CSP in every $\cont{\aut{\struct{A} \sqcup \word{A_0}}}$, and then by the transfer principle (see \cite{licsMRP}) recover the desired property.

A careful inspection of the proof of Halpern \cite{halpern1964independence} shows that the crucial property of $\struct{Q}$  is that  $\aut{\struct{Q}}$ has the Ramsey property.
This was further explored in \cite{johnstone1984almost} and in full details in \cite{blass1986prime}. Moreover, Theorem~2 of \cite{blass1986prime} states that the Ramsey property of $\aut{\struct{A}}$ is equivalent to BPIT in $\catw{ZFA}(\struct{A})$. Therefore, Ramsey property of $\aut{\struct{A}}$ is also equivalent to Axiom~CSP in $\catw{ZFA}(\struct{A})$. This is, however, not enough from the reason mentioned in the above: the literal translation of BPIT to $\catw{ZFA}(\struct{A})$ says that every symmetric ideal on a symmetric Boolean algebra can be extended to a symmetric prime ideal. This statement is weaker than: ``every $A_0$-equivariant ideal on an $A_0$-equivariant Boolean algebra can be extended to an $A_0$-equivariant prime ideal''. Fortunately, inspection of the proof \cite{blass1986prime} shows that the constructed prime ideal is, in fact, $A_0$-equivariant.

In 2005 Kechris, Pestov and Todorcevic in their famous work on topological dynamic \cite{kechris2005fraisse} showed that for countable single-sorted structures $\struct{A}$ the Ramsey property for $\struct{A}$ is equivalent to \emph{extreme amenability} of $\aut{\struct{A}}$.

\begin{definition}[Extremely amenable group]
\label{d:extremely}
A topological group $\group{G}$ is called extremely amenable if its every action $\mor{(\bullet)}{\group{G} \times X}{X}$ on a non-empty compact Hausdorff space $X$ has a fixed point.
\end{definition}

Therefore, for countable single-sorted structures $\struct{A}$ BPIT in $\catw{ZFA}(\struct{A})$ is equivalent to the extreme amenability of $\aut{\struct{A}}$. This was first observed by Andreas Blass in 2011 in \cite{blass2011partitions}. Furthermore, Proposition~4.7 in \cite{kechris2005fraisse} says that the class of such structures coincides with the class of structures that arise as the Fraisse limit of a Fraisse order class with the Ramsey property.

From the perspective of effective computation in set with atoms, structure $\struct{A}$ have to be countable, thus the restriction in the above equivalence to countable structures only is not severe. Moreover, the Fraisse limit (over a relational signature) is always $\omega$-categorical, a property crucial for the termination of certain while-programs (see \cite{licsMRP} for more discussion).

\section{The axiom in Boolean toposes}
\label{s:boolean:toposes}

In this section we shall work in the internal language of a Boolean topos. A reader who is not familiar with the notion of the internal language may read the proofs as taking place in ZFA minus the axiom of extensionality\footnote{A set in a non-well-pointed topos may have more content than mere elements} (e.g.~$\emptyset$-equivariant sets with atoms). We shall be extra careful when defining set-theoretic concepts, such as finiteness, or a prime ideal. Although, in Boolean toposes many different definitions of such concepts coincide, this would not be the case for non-Boolean toposes studied in the next section.

\begin{definition}[Kuratowski finiteness]\label{d:finiteness}
Let $A$ be a set. By $K(A)$ we shall mean the sub-join-semilatice of the powerset $P(A)$ generated by singletons and the empty set. A set $A$ is Kuratowski-finite if it is the top element in $K(A)$.
\end{definition}
For the rest of this section we shall just write finite set for Kuratowski-finite set. The chief idea behind the above definition is that since a non-empty finite set $A$ can be constructed from singletons by taking binary unions, we have a certain induction principle. Let us assume that: (base of the induction) $\phi$ holds for singletons, and (step of the induction) whenever $\phi$ holds for $A_0 \subseteq A$ and $A_1 \subseteq A$ then $\phi$ holds for $A_0 \cup A_1$, then (conclusion) $\phi$ holds for $A$. For example, we can show that the Axiom of Choice internally holds for finite sets. To see this, recall the usual reformulation of AC for finite sets: every surjection $\mor{e}{X}{Y}$ onto a finite set $Y$ has a section $\mor{s}{Y}{X}$, i.e.: $e \circ s = \id{Y}$. Let us assume that $\mor{e}{X}{Y}$ is a surjection. Then for every finite $D$, the function $\mor{e^D}{X^D}{Y^D}$, where $e^D(h) = e \circ h$, is also a surjection. This can be proven by induction over $D$. If $D$ is the empty set, or a singleton, then the claim clearly holds. 
Therefore, let us assume the claim holds for finite $D_0$, $D_1$ and show that it also holds for $D_0 \cup D_1$. Since the topos is Boolean, without the loss of generality, we may assume that $D_0$ and $D_1$ are disjoint. The function $\mor{e^{D_0 \cup D_1}}{X^{D_0 \cup D_1}}{Y^{D_0 \cup D_1}}$ decomposes on disjoint $\mor{e^{D_0}}{X^{D_0}}{Y^{D_0}}$ and $\mor{e^{D_1}}{X^{D_1}}{Y^{D_1}}$ with $e^{D_0 \cup D_1} = e^{D_0} \times e^{D_1}$. Because the Cartesian product of two surjections is a surjection, we may infer that $e^{D_0 \cup D_1}$ is a surjection, what completes the step of the induction. Therefore, if $\mor{e}{X}{Y}$ is a surjection then for every finite $D$ we have that $\mor{e^D}{X^D}{Y^D}$ is a surjection. By setting $D = Y$, we obtain that $\mor{e^Y}{X^Y}{Y^Y}$ is a surjection and so for every $i \in Y^Y$ there exists $h \in X^Y$ such that $e^Y(h) = i$. In particular, for $\id{Y} \in Y^Y$ there exists $s \in X^Y$ such that $e^Y(s) = \id{Y}$. But, $e^Y(s) = e \circ s$, what completes the proof.

\begin{definition}[Finitary relation]
For  sets $A, B$ we shall call $K(A \times B) \subseteq P(A \times B)$ the set of finitary relations from $A$ to $B$.  A finitary relation $R$ is a partial function if the following holds: $R(a, b) \wedge R(a, b') \vdash b = b'$. We shell denote the set of finitary partial functions from $A$ to $B$ by $B^{\underline{A}}$.
\end{definition}
In a Boolean topos a subset of a finite set is finite, therefore if $A$ and $B$ are finite, then a finitary relation from $A$ to $B$ is just a relation from $A$ to $B$.

\begin{definition}[Finitary homomorphism]
Let $\struct{A}$ and $\struct{B}$ be two relational structures over a common signature $\Sigma$. A relational homomorphism from $\struct{A}$ to $\struct{B}$ is a relation $f \in  P(A \times B)$ that preserves all relations $R/k \in \Sigma$, i.e.: $$f(a_1, b_1) \wedge f(a_2, b_2) \wedge \dotsc \wedge f(a_k, b_k) \wedge R(a_1, a_2, \cdots, a_k) \vdash R(b_1, b_2, \cdots, b_k)$$
A finitary homomorphism is a finitary partial function which is also a relational homomorphism. The set of all finitary  homomorphisms from $\struct{A}$ to $\struct{B}$ will be denoted by $\hom(\underline{\struct{A}}, \struct{B})$.
\end{definition}

Let us observe that there is a morphism $\mor{\gamma_0}{\hom(\underline{\struct{A}}, \struct{B})}{K(A)}$ that assigns to a finitary homomorphism $h \in \hom(\underline{\struct{V}}, \struct{D})$ its domain $\gamma_0(h) \in K(A) \subseteq P(A)$.

\begin{definition}[Jointly-total homomorphisms]
Let $\struct{A}$ and $\struct{B}$ be two relational structures over a common signature $\Sigma$. We shall say that a set of finitary homomorphisms $H \subseteq \hom(\underline{\struct{A}}, \struct{B})$ is jointly total if every finite $A_0 \in K(A)$ is  a subdomain of a finitary homomorphisms from $H$, i.e.: $\exists_{h \in H} A_0 \subseteq \gamma_0(h)$.
\end{definition}

Now, we are ready to state Axiom~CSP in Boolean toposes.

\begin{axiom}[CSP]
For every relational signature $\Sigma$ and a pair of structures $\struct{V}$ and $\struct{D}$ over $\Sigma$ such that $D$ is a finite cardinal, the following are equivalent:
\begin{itemize}
\item there exists a homomorphism from $\struct{V}$ to $\struct{D}$
\item the set of finitary homomorphisms $\hom(\underline{\struct{V}}, \struct{D})$ is jointly total
\end{itemize}
\end{axiom}

In the below we shall show that Axiom~CSP in Boolean toposes is equivalent to Boolean prime ideal theorem (BPIT). Let us recall the terminology first.

\begin{definition}[Boolean algebra]
\label{d:boolean:algebra}
An algebra $\struct{B}$ is a structure $\tuple{B, 1, {\wedge}, {\neg}}$, where $1$ is a constant, $\mor{{\wedge}}{B\times B}{B}$ is a binary operation, and $\mor{{\neg}}{B}{B}$ is an unary operation. Consider relation ${\leq} \subseteq B \times B$ defined as: $a \leq b \Leftrightarrow a = a \wedge b$. We say that $\struct{B}$ is a Boolean algebra if the following holds:
\begin{itemize}
\item ${\leq}$ is a partial order on $B$ with finite joins given by $\wedge$ and the greatest element $1$ 
\item for every $b \in B$ we have that: $\neg \neg b = b$
\end{itemize}
\end{definition}
If $\struct{B}$ is a Boolean algebra, then $1$ is its internal true value, and operation $\wedge$ is the internal conjunction. Other operations in a Boolean algebra can be defined in the usual way:
\begin{itemize}
\item $0 = \neg 1$ for the false value
\item $a \vee b = \neg(\neg a \wedge \neg b)$ for the internal disjunction
\item $a \oplus b = (a \land b) \lor (\neg a \land \neg b)$
\end{itemize}

\begin{axiom}[BPIT]
For every non-trivial Boolean algebra $\struct{B}$ there is a homomorphism $\struct{B} \rightarrow 2$ to the initial Boolean algebra $2 = 1 \sqcup 1$.
\end{axiom}

The constraint satisfaction problem is defined over \emph{relational} structures. Therefore, to fit into the framework of CSP we should treat a Boolean algebra $\struct{B}$ as if it was defined over a relational signature, with an unary predicate $\word{top}(x) \Leftrightarrow x = 1$, ternary  predicate $\word{and}(x,y,z) \Leftrightarrow x \land y = z$ and binary predicate $\word{not}(x, y) \Leftrightarrow  \neg x = y$. The axioms should express that there exists unique $x$ that satisfy $\word{top}$ and that $\word{and}$ and $\word{not}$ are functional relations.

\begin{theorem}[Axiom~CSP implies BPIT]\label{t:csp:to:bpit}
Axiom~CSP implies BPIT in Boolean toposes.
\end{theorem}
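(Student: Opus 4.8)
The plan is to realise BPIT as a single instance of Axiom~CSP. Fix a non-trivial Boolean algebra $\struct{B}$ and regard it as a relational structure $\struct{V}$ over the signature $\Sigma$ with a unary $\word{top}$, ternary $\word{and}$ and binary $\word{not}$ interpreted, as in the remark above, by $\word{top}(x) \Leftrightarrow x = 1$, $\word{and}(x,y,z) \Leftrightarrow x \wedge y = z$ and $\word{not}(x,y) \Leftrightarrow \neg x = y$. Let $\struct{D}$ be the initial Boolean algebra $2 = 1 \sqcup 1$ carrying the induced interpretation of $\Sigma$; since $D = 2$ is a finite cardinal, the pair $\struct{V}, \struct{D}$ is admissible for Axiom~CSP. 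The first observation I would record is that a total relational homomorphism $\mor{f}{\struct{V}}{\struct{D}}$ is exactly a Boolean-algebra homomorphism $\struct{B} \to 2$: preservation of $\word{top}$, $\word{and}$ and $\word{not}$ forces $f(1) = 1$, $f(a \wedge b) = f(a) \wedge f(b)$ and $f(\neg a) = \neg f(a)$, whence $f(0) = 0$ and preservation of $\vee$ follow from the defining equations of the derived operations. So it suffices to produce some homomorphism $\struct{V} \to \struct{D}$, and by the equivalence in Axiom~CSP this reduces to proving that $\hom(\underline{\struct{V}}, \struct{D})$ is jointly total.

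For joint totality I must show that every finite $B_0 \in K(B)$ is contained in the domain of some finitary homomorphism into $2$. Given such a $B_0$, I would pass to the Boolean subalgebra $B_1 \subseteq B$ generated by $B_0$. Since $B_0$ is Kuratowski-finite and the topos is Boolean, $B_1$ is again Kuratowski-finite (its elements are the finitely many disjunctive normal forms built from members of $B_0$ and their complements), and $B_1$ is non-trivial because $0 \neq 1$ in $\struct{B}$. A finite non-trivial Boolean algebra has an atom $p$, found by a minimality search over the inhabited Kuratowski-finite set of non-zero elements, and the assignment $h(x) = 1 \Leftrightarrow p \leq x$ is a Boolean-algebra homomorphism $B_1 \to 2$. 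Viewed as a finitary partial function $B \to 2$ with domain $\gamma_0(h) = B_1 \supseteq B_0$, this $h$ preserves $\word{top}$, $\word{and}$ and $\word{not}$ on all arguments lying in $B_1$ (because $B_1$ is closed under the operations and $h$ is a genuine homomorphism there), so $h \in \hom(\underline{\struct{V}}, \struct{D})$ and $B_0 \subseteq \gamma_0(h)$, as required.

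Combining the two steps, $\hom(\underline{\struct{V}}, \struct{D})$ is jointly total, so Axiom~CSP delivers a total homomorphism $\mor{f}{\struct{V}}{\struct{D}}$, which by the first observation is precisely a Boolean-algebra homomorphism $\struct{B} \to 2$, establishing BPIT. I expect the main obstacle to be the joint-totality step, and inside it the internal justification of the two ``classical'' facts used: that the subalgebra generated by a Kuratowski-finite set is again Kuratowski-finite, and that a finite non-trivial Boolean algebra has a point (a homomorphism to $2$). Both must be carried out in the internal language of the topos rather than by appeal to the external classification of finite Boolean algebras; here the induction principle for Kuratowski-finiteness recorded earlier, Booleanness (decidability of equality and of $\leq$, which legitimises the atom search), and the internal validity of the axiom of choice for finite sets proved above are exactly what make the normal-form bound and the minimality argument go through.
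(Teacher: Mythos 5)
Your proposal is correct and follows essentially the same route as the paper: reduce BPIT to joint totality of $\hom(\underline{\struct{B}}, 2)$ via Axiom~CSP, pass to the Boolean subalgebra generated by a finite $B_0$ (which is again Kuratowski-finite, by the Boolean-topos fact about finite powersets), and exhibit a prime ideal on that finite non-trivial algebra. The only divergence is in that last sub-step: the paper obtains a maximal --- hence, by the Boolean internal logic, prime --- ideal by running Zorn's lemma on top of the internally proved finite axiom of choice, whereas you locate an atom $p$ by a minimality search and take the principal homomorphism $h(x) = 1 \Leftrightarrow p \leq x$; both are internalisations of the same classical fact, yours being arguably the more elementary of the two.
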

\begin{proof}
Let $\struct{B}$ be a Boolean algebra. 
%
By Axiom~CSP, it suffices to show that the set of finitary homomorphisms $\hom(\underline{\struct{B}}, 2)$ is jointly total, i.e.~for every finite $B_0$ in $K(B)$ there exists a partial homomorphism $B_0 \subseteq B_1 \rightarrow 2$. We can assume that $B_1$ is closed under Boolean-algebra operations and still finite. The reason for that is that if $B_0$ is finite then in a Boolean topos $P(B_0)$ is finite as well (it coincides with $K(B_0)$). Because we have shown that the AC holds for finite sets, the standard proof of Zorn's Lemma can be carried over to our setting to show that $B_1$ has a maximal ideal, therefore (using again Boolean logic of the topos) $B_1$ has a prime ideal.
\end{proof}

\begin{theorem}[BPIT implies Axiom~CSP]\label{t:bpit:to:csp}
Axiom~BPIT implies Axiom~CSP in Boolean toposes.
\end{theorem}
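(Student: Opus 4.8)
The plan is to prove the two implications of the equivalence separately; only the converse will use BPIT. The easy direction is immediate: if there is a total homomorphism $\mor{S}{V}{D}$, then for any finite $V_0 \in K(V)$ the restriction $S|_{V_0} = \{\tuple{v, S(v)} : v \in V_0\}$ is a Kuratowski-finite subobject of $V \times D$, hence a finitary partial function, and being a restriction of a relational homomorphism it is again a relational homomorphism. Thus $S|_{V_0} \in \hom(\underline{\struct{V}}, \struct{D})$ with $\gamma_0(S|_{V_0}) = V_0$, so $\hom(\underline{\struct{V}}, \struct{D})$ is jointly total; this uses only that images of finite sets are finite.

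For the converse I would reduce the existence of a global homomorphism to the propositional compactness content of BPIT. Introduce, for each $v \in V$ and $d \in D$, a generator $p_{v,d}$ (read ``$v$ is assigned $d$''), and let $\struct{F}$ be the free Boolean algebra on the object $V \times D$. Encode the CSP as the ideal $I \subseteq \struct{F}$ generated by the elements $\bigwedge_{d \in D} \neg p_{v,d}$ (each variable gets at least one value), $p_{v,d} \wedge p_{v,d'}$ for $d \neq d'$ (at most one value), and $p_{x_1,d_1} \wedge \cdots \wedge p_{x_k,d_k}$ for every $R/k \in \Sigma$, every tuple with $R^V(x_1,\dots,x_k)$, and every bad $D$-tuple $\tuple{d_1,\dots,d_k} \notin R^D$. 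Since $D$ is a finite cardinal and all arities are finite, each generator is a finite meet of atoms, hence a genuine element of $\struct{F}$. Set $\struct{B} = \struct{F}/I$. By freeness, Boolean homomorphisms $\struct{F} \to 2$ correspond to assignments $s : V \times D \to 2$, and such an $s$ kills all generators of $I$ exactly when it assigns each variable a unique value and respects every relation, i.e.\ exactly when the induced function $\mor{S}{V}{D}$ is a homomorphism. So a homomorphism $\struct{B} \to 2$ is precisely a solution, and by BPIT it is enough to show $\struct{B}$ is non-trivial, after which one pulls the homomorphism back along $\struct{F} \to \struct{B}$ to recover $S$.

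The heart of the argument is the non-triviality of $\struct{B}$, namely $1 \notin I$. As $I$ is generated by the family above, $1 \in I$ would force a Kuratowski-finite subfamily $G_0$ of generators with $\bigvee G_0 = 1$. Such a $G_0$ mentions only finitely many variables $V_0 \in K(V)$, so by joint totality there is a finitary homomorphism $h$ with $V_0 \subseteq \gamma_0(h)$. Extend $h$ to a total assignment $s : V \times D \to 2$ by $s(v,d) = 1 \Leftrightarrow h(v) = d$ on $\gamma_0(h)$ and by a fixed value $d_0 \in D$ elsewhere (the cardinal $D$ is non-empty, as otherwise joint totality forces $V = \emptyset$ and the empty function already solves the CSP). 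Then the Boolean homomorphism $s : \struct{F} \to 2$ sends every generator of $G_0$ to $0$: the ``at least/at most one'' generators vanish because $h$ is single-valued, and each relation generator vanishes because $\tuple{h(x_1),\dots,h(x_k)} \in R^D$ can never equal a bad tuple. Hence $s(\bigvee G_0) = 0 \neq 1$, contradicting $\bigvee G_0 = 1$; so $1 \notin I$, $\struct{B}$ is non-trivial, and BPIT finishes the proof.

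I expect the main obstacle to be carrying this out internally in the Boolean topos rather than in naive set theory: one must know that the free Boolean algebra on $V \times D$ and the ideal $I$ exist as internal objects, that the statement $1 \in I$ unwinds internally to the existence of a Kuratowski-finite witnessing subfamily $G_0$, and that the finite Axiom of Choice established earlier (together with the classical internal logic available in a Boolean topos) legitimises both the construction of $s$ and the refutation of $\bigvee G_0 = 1$. The combinatorial reduction itself is routine; it is the internal bookkeeping of finiteness and the internal validity of the compactness step that require care.
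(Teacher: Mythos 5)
Your proposal is correct and follows essentially the same route as the paper's own proof: both encode the CSP into the free Boolean algebra on $V \times D$ with the same three families of constraints (each variable gets at least one value, at most one value, and no constraint is violated), both use joint totality together with the finite-variable character of each finite subfamily to establish non-triviality of the quotient algebra, and both then apply BPIT to the quotient and read off the solution from the resulting homomorphism to $2$. The differences are cosmetic only — you quotient by the ideal generated by the negated clauses where the paper quotients by a semantically defined congruence (forcing the clauses to be true), and you additionally spell out the easy direction and the total extension of the partial valuation, which the paper leaves implicit.
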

\begin{proof}
Let us assume that $\struct{V}$ and $\struct{D}$ are structures over relational signature $\Sigma$. Furthermore, assume that $\struct{D} = \{0, 1, \dotsc, D-1\}$ is a finite cardinal and the set of finitary homomorphisms $\hom(\underline{\struct{V}}, \struct{D})$ is jointly total. We shall treat $\word{Var} = V \times D$ as a set of propositional variables. Consider the following subsets of propositions $F(\word{Var})$, where $F(\word{Var})$ is treated as the free Boolean algebra on $\word{Var}$:
\begin{itemize}
\item $T = \{ \tuple{v, 0} \lor \tuple{v, 1} \lor \cdots \lor \tuple{v, D-1} \colon v \in V \}$
\item $S = \{\neg (\tuple{v, n} \land \tuple{v, m}) \colon v \in V, n \in D, m \in D, n \neq m\}$
\item $C = \{\neg (\tuple{x_1, d_1} \land  \tuple{x_2, d_2} \land \cdots \land \tuple{x_n, d_n}) \colon R^V(x_1, x_2, \dotsc, x_n) \land \neg R^D(d_1, d_2, \dotsc, d_n)\}$
\end{itemize}
Consider the following set of propositions: $P = T \cup S \cup C$. Let us say that two propositions $\phi, \psi$ from $F(\word{Var})$ are equivalent if there is a finite $P_0 \subset P$ such that every valuation $V_0 \times D \rightarrow 2$ satisfying $P_0$ satisfies $\phi \oplus \psi$. Then $F(\word{Var})$ divided by this equivalence relation is again a Boolean algebra $F(\word{Var})/\equiv$ with the usual operations. We want to show that $F(\word{Var})/\equiv$ is non-trivial, i.e. $0 \neq 1$. Because every finite subset $P_0 \in P$ involves only finitely many variables $\word{Var}_0 \subseteq \word{Var}$, the set $V_0 = \gamma_0(\word{Var}_0) \subseteq V$ is finite. In fact, $P_0$ can be rewritten as the union of:
\begin{itemize}
\item $T_0 = \{ \tuple{v, 0} \lor \tuple{v, 1} \lor \cdots \lor \tuple{v, D-1} \colon v \in V_0 \}$
\item $S_0 = \{\neg (\tuple{v, n} \land \tuple{v, m}) \colon v \in V_0, n \in D, m \in D, n \neq m\}$
\item $C_0 = \{\neg (\tuple{x_1, d_1} \land  \tuple{x_2, d_2} \land \cdots \land \tuple{x_n, d_n}) \colon x_i \in V_0, R^V(x_1, x_2, \dotsc, x_n) \land \neg R^D(d_1, d_2, \dotsc, d_n)\}$
\end{itemize}
Since $V_0$ is finite, by the assumption, there exists a finitary homomorphism $h_0 \in \hom(\underline{\struct{V}}, \struct{D})$ with $V_0 \subseteq \gamma_0(h_0)$, which induces a valuation $V_0 \times D \rightarrow 2$. By the definition of the constraints, this valuation makes $P_0$ satisfiable. Therefore, every finite $P_0$ is satisfiable, and so $F(\word{Var})/\equiv$ is non-trivial. By Axiom~BPIT, there is a prime ideal $\mor{u}{F(\word{Var})/\equiv}{2}$, which composed with the canonical embedding $\mor{j}{F(\word{Var})}{F(\word{Var})/\equiv}$ gives a prime ideal $\overline{h} = u \circ j$ on $F(\word{Var})$. By the definition $\overline{h}$ maps propositions from $P$ to $1$. Consider the restriction $\mor{h}{V \times D}{2}$ of $\overline{h}$ to variables $\word{Var} = V \times D$. By propositions $T$ valuation $h$ is total and by propositions $S$ it is single-valued. Moreover, by propositions $C$ the valuation dos not violate any constraints. Therefore, $h$ is a homomorphism from $\struct{V}$ to $\struct{D}$. 
\end{proof}

\subsection{Characterisation theorems}
\label{ss:characterisation}

This subsection states our main characterisation theorems. We shall begin with a simple purely-logical proof that ZFA-Axiom~CSP holds in the ordered Fraenkel-Mostowski model of set theory with atoms. Observe, that we rely only on the old combinatorial result of Halpern \cite{halpern1964independence}.

\begin{theorem}[ZFA-Axiom~CSP in $\catw{ZFA}(\struct{Q})$]
ZFA-Axiom~CSP holds in $\catw{ZFA}(\struct{Q})$.
\end{theorem}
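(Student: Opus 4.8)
The plan is to reduce the statement about $\catw{ZFA}(\struct{Q})$ to the already-established equivalence between Axiom~CSP and BPIT in Boolean toposes, and then invoke Halpern's classical result. Concretely, $\catw{ZFA}(\struct{Q})$ (or more precisely each category $\cont{\aut{\struct{Q} \sqcup \word{A_0}}}$ of continuous actions) is a Boolean topos, so by Theorem~\ref{t:bpit:to:csp} it suffices to verify that BPIT holds there. The known fact we may lean on is that the automorphism group $\aut{\struct{Q}}$ of the rationals with their ordering has the Ramsey property, and that Halpern \cite{halpern1964independence} established precisely that BPIT holds in $\catw{ZFA}(\struct{Q})$ while the Axiom of Choice fails. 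So the skeleton is: $\catw{ZFA}(\struct{Q}) \models \text{BPIT} \Rightarrow \catw{ZFA}(\struct{Q}) \models \text{Axiom~CSP} \Rightarrow \catw{ZFA}(\struct{Q}) \models \text{ZFA-Axiom~CSP}$.

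First I would make the first arrow precise. Theorem~\ref{t:bpit:to:csp} is stated for an arbitrary Boolean topos, so I would note that the relevant categories arising from $\catw{ZFA}(\struct{Q})$ are Boolean toposes (they are of the form $\cont{\group{G}}$ for $\group{G} = \aut{\struct{Q} \sqcup \word{A_0}}$, which is again the automorphism group of an $\omega$-categorical ordered structure, hence the same reasoning applies uniformly). This is where the caveat flagged in the introduction matters: the intrinsic Axiom~CSP must be verified in every $\cont{\aut{\struct{Q} \sqcup \word{A_0}}}$ for finite $A_0$, not merely in the $\emptyset$-equivariant fragment, and then the transfer principle of \cite{licsMRP} lifts the conclusion back to the equivariance statement of ZFA-Axiom~CSP. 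So the second arrow is the transfer step: an $X$-equivariant CSP, read inside the appropriate $\cont{\aut{\struct{Q} \sqcup \word{A_0}}}$ with $A_0$ a finite support of $X$, is an ordinary CSP to which the internal Axiom~CSP applies, yielding an $X$-equivariant solution whenever a solution exists.

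The genuine content therefore sits in the first arrow, which rests on Halpern's theorem. I would verify that BPIT holds in each $\cont{\aut{\struct{Q} \sqcup \word{A_0}}}$ via the Ramsey property of $\aut{\struct{Q}}$, exactly as in \cite{halpern1964independence} (and refined in \cite{blass1986prime}): given an equivariant non-trivial Boolean algebra $\struct{B}$, one builds a prime ideal by a transfinite construction in which the Ramsey property is used to keep the partial ideal equivariant at each stage. The key point — and the reason we need the refinement of \cite{blass1986prime} rather than the bare BPIT — is that the prime ideal produced is not merely symmetric but $A_0$-equivariant, so that it lives in the right category and the transfer principle can be applied.

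The main obstacle I anticipate is precisely this equivariance bookkeeping: the literal internalisation of BPIT in $\catw{ZFA}(\struct{Q})$ only guarantees a \emph{symmetric} prime ideal, which is strictly weaker than the $A_0$-equivariant prime ideal that Axiom~CSP (stated across all $\cont{\aut{\struct{Q} \sqcup \word{A_0}}}$) demands. Resolving this requires inspecting Halpern's construction and confirming, as \cite{blass1986prime} does, that the Ramsey property lets one choose the extensions equivariantly rather than merely with finite support that grows uncontrollably. Once that is in hand, the two implications compose and the theorem follows; the remaining details are the routine verifications that $\cont{\aut{\struct{Q} \sqcup \word{A_0}}}$ is Boolean and that the CSP data transfer correctly between the cumulative hierarchy and the toposes of continuous actions.
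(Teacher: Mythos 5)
Your proposal is correct and follows essentially the same route as the paper: establish the $X$-equivariant refinement of BPIT by inspecting Halpern's construction (noting, as the paper does, that the Ramsey property yields an $X$-equivariant prime ideal rather than a merely symmetric one), apply Theorem~\ref{t:bpit:to:csp} in each Boolean topos $\cont{\aut{\struct{Q} \sqcup \word{X}}}$, and transfer back to ZFA-Axiom~CSP. The equivariance caveat you flag as the main obstacle is exactly the point the paper resolves the same way, so there is no gap.
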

\begin{proof}
A careful inspection of the proof of Halpern \cite{halpern1964independence} shows that if a Boolean algebra $\struct{B}$ in $\catw{ZFA}(\struct{Q})$ is $X$-equivariant than it has an $X$-equivariant prime ideal. Therefore, BPIT holds in $\cont{\aut{\struct{Q} \sqcup \word{X}}}$. By Theorem~\ref{t:bpit:to:csp}, Axiom~CSP holds in $\cont{\aut{\struct{Q} \sqcup \word{X}}}$. Therefore, ZFA-Axiom~CSP holds in $\catw{ZFA}(\struct{Q})$.
\end{proof}

The proof of the next theorem is similar.
\begin{theorem}[ZFA-Axiom~CSP in $\catw{ZFA}(\struct{A})$]\label{t:zfa:axiom}
Let $\struct{A}$ be a countable structure. ZFA-Axiom~CSP holds in $\catw{ZFA}(\struct{A})$ if and only if the automorphism group of $\struct{A}$ is extremely amenable.  
\end{theorem}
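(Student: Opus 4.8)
The plan is to assemble the equivalence from the chain of results already laid out in the excerpt, treating it as a transfer-principle argument connecting an intrinsic topos-level axiom to a combinatorial property of $\aut{\struct{A}}$. The key reductions are all available: Theorems~\ref{t:csp:to:bpit} and~\ref{t:bpit:to:csp} together show that Axiom~CSP and BPIT are equivalent in any Boolean topos, and in particular in each $\cont{\aut{\struct{A} \sqcup \word{A_0}}}$; the excerpt further records that the Ramsey property of $\aut{\struct{A}}$ is equivalent to BPIT in $\catw{ZFA}(\struct{A})$ (via Theorem~2 of \cite{blass1986prime}, whose proof produces an $A_0$-equivariant prime ideal); and the Kechris--Pestov--Todorcevic theorem \cite{kechris2005fraisse} identifies the Ramsey property of a countable single-sorted $\struct{A}$ with extreme amenability of $\aut{\struct{A}}$.

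First I would make precise the passage between ZFA-Axiom~CSP stated in $\catw{ZFA}(\struct{A})$ and Axiom~CSP stated in the categories $\cont{\aut{\struct{A} \sqcup \word{A_0}}}$. Since equivariance is not intrinsic to $\catw{ZFA}(\struct{A})$ (as the excerpt stresses), ZFA-Axiom~CSP must be read as: for every finite $A_0 \subseteq A$, Axiom~CSP holds in the topos $\cont{\aut{\struct{A} \sqcup \word{A_0}}}$ of continuous actions of $\aut{\struct{A} \sqcup A_0}$ on discrete sets. This is exactly the transfer principle of \cite{licsMRP} applied to the recovery of the $A_0$-equivariant statement from its intrinsic form. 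I would state this reduction explicitly as the first step so the rest of the argument can work at the level of a fixed Boolean topos $\cont{\aut{\struct{A} \sqcup \word{A_0}}}$.

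Next I would chain the equivalences. For each finite $A_0$, by Theorems~\ref{t:csp:to:bpit} and~\ref{t:bpit:to:csp}, Axiom~CSP in $\cont{\aut{\struct{A} \sqcup \word{A_0}}}$ is equivalent to BPIT in $\cont{\aut{\struct{A} \sqcup \word{A_0}}}$. Invoking the $A_0$-equivariant strengthening of the Blass result, BPIT across all $\cont{\aut{\struct{A} \sqcup \word{A_0}}}$ is equivalent to the Ramsey property of $\aut{\struct{A}}$; here the crucial point, already noted in the excerpt, is that the prime ideal produced in \cite{blass1986prime} is $A_0$-equivariant, so the per-$A_0$ statements assemble into a single Ramsey statement rather than merely the weaker ``symmetric'' version. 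Finally, since $\struct{A}$ is countable and single-sorted, \cite{kechris2005fraisse} converts the Ramsey property of $\aut{\struct{A}}$ into extreme amenability of $\aut{\struct{A}}$, completing the equivalence.

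The main obstacle is the bookkeeping at the transfer step: one must verify that ZFA-Axiom~CSP quantified over all equivariant sets in $\catw{ZFA}(\struct{A})$ really does correspond to Axiom~CSP holding uniformly in every $\cont{\aut{\struct{A} \sqcup \word{A_0}}}$, and that the equivariance grade $A_0$ is preserved throughout the BPIT--Ramsey correspondence. Everything else is a citation or an application of an already-proved theorem, so I would keep the proof short, essentially reproducing the structure of the preceding $\catw{ZFA}(\struct{Q})$ proof but substituting the Ramsey-to-extreme-amenability equivalence of \cite{kechris2005fraisse} for the direct appeal to Halpern's combinatorics, and being careful to use countability of $\struct{A}$ exactly where \cite{kechris2005fraisse} requires it.
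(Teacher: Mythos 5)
Your proposal is correct and follows essentially the same route as the paper's own proof: the equivariant strengthening of Blass's Theorem~2 (Ramsey property $\Leftrightarrow$ $A_0$-equivariant BPIT), combined with Theorems~\ref{t:csp:to:bpit} and~\ref{t:bpit:to:csp} to pass between BPIT and Axiom~CSP in each $\cont{\aut{\struct{A} \sqcup \word{A_0}}}$, and finally the Kechris--Pestov--Todorcevic equivalence between the Ramsey property and extreme amenability for countable structures. Your treatment is in fact somewhat more careful than the paper's, which leaves the transfer step and the appeal to Theorems~\ref{t:csp:to:bpit} and~\ref{t:bpit:to:csp} implicit in the word ``Therefore.''
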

\begin{proof}
A careful inspection of Theorem~2 of \cite{blass1986prime} states that the Ramsey property of $\struct{A}$ is equivalent to the property that every $X$-equivariant Boolean algebra $\struct{B}$ in $\catw{ZFA}(\struct{A})$ has an $X$-equivariant prime ideal.
Therefore, Ramsey property of $\struct{A}$ is also equivalent to ZFA-Axiom~CSP in $\catw{ZFA}(\struct{A})$. And by the result of Kechris, Pestov and Todorcevic \cite{kechris2005fraisse}, ZFA-Axiom~CSP in $\catw{ZFA}(\struct{A})$ is equivalent to extreme amenability of the automorphism group of $\struct{A}$.
\end{proof}

We shall now consider a weaker versions of Axiom~CSP and show that in continuous sets over a localic group it is equivalent to Axiom~CSP. 

\begin{definition}[Compact object]\label{d:compact}
An object $X$ of a cocomplete category $\catl{C}$ is called compact if its co-representation ${\mor{\hom_{\catl{C}}(X, -)}{\catl{C}}{\catw{Set}}}$ preserves filtered colimits of monomorphisms.
\end{definition}

\begin{axiom}[Compact~CSP]
For every relational signature $\Sigma$ and a pair of structures $\struct{V}$ and $\struct{D}$ over $\Sigma$ such that $D$ is a finite cardinal and $\struct{V}$ is compact, the following are equivalent:
\begin{itemize}
\item there exists a homomorphism from $\struct{V}$ to $\struct{D}$
\item the set of finitary homomorphisms $\hom(\underline{\struct{V}}, \struct{D})$ is jointly total
\end{itemize}
\end{axiom}

An object $A$ in $\cont{G}$ can be regarded as a relational structure $\tuple{A, (T_g)_{g \in \group{G}}}$, where $T_g(a, b) \Leftrightarrow g \circ a = b$ is a binary relation on $A$. Then a function $\mor{f}{A}{B}$ is a homomorphism $\mor{f}{\tuple{A, (T_g)_{g \in \group{G}}}}{\tuple{B, (T_g)_{g \in \group{G}}}}$ iff it is equivariant.  

\begin{theorem}[Compact~CSP implies Axiom~CSP in continuous sets]\label{t:compactcsp:csp}
Let $\group{G}$ be a localic group and $\cont{G}$ be the topos of its continuous actions on $\catw{Set}$. Then Compact~CSP holds in $\cont{G}$ iff Axiom~CSP holds in $\cont{G}$.
\end{theorem}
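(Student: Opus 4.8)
The two axioms differ only in that Compact~CSP restricts the quantification to compact $\struct{V}$, so the implication Axiom~CSP $\Rightarrow$ Compact~CSP is immediate: every instance of Compact~CSP is \emph{a fortiori} an instance of Axiom~CSP. The whole content therefore lies in the converse, and there only the non-trivial direction needs attention, since the implication ``a total homomorphism $\struct{V}\to\struct{D}$ exists $\Rightarrow$ $\hom(\underline{\struct{V}},\struct{D})$ is jointly total'' holds for \emph{arbitrary} $\struct{V}$ simply by restricting a homomorphism to finite subdomains. So the plan is: assuming Compact~CSP, and given an arbitrary $\struct{V}$ with $\hom(\underline{\struct{V}},\struct{D})$ jointly total, to produce a homomorphism $\struct{V}\to\struct{D}$.

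The structural input I would invoke is that in the topos $\cont{G}$ of continuous actions of a localic group every object is the filtered union of its compact subobjects; this is where localic-ness enters. Writing $\struct{V}=\varinjlim_{i\in I}\struct{V}_i$ along the poset $I$ of compact subobjects (directed, since a union of two finite-orbit subobjects is again compact), I equip each $\struct{V}_i$ with the induced substructure $R^{V_i}=R^{V}\cap V_i^{k}$, which makes every inclusion a homomorphism and, crucially, places every instance of every relation of $\struct{V}$ inside some $\struct{V}_i$. First I would check that each $\struct{V}_i$ again has jointly total finitary homomorphisms: given a finite $A_0\subseteq V_i$, joint totality of $\struct{V}$ supplies a finitary $h$ with $A_0\subseteq\gamma_0(h)$, and its restriction to $V_i$ is a finitary homomorphism of $\struct{V}_i$ with domain containing $A_0$. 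As $\struct{V}_i$ is compact, Compact~CSP then yields that the solution set $S_i=\{f\colon V_i\to D \mid f \text{ a total homomorphism}\}$ is inhabited. Restriction defines bonding maps $S_j\to S_i$ whenever $V_i\subseteq V_j$, and because every relation instance lives in some piece, a compatible family is exactly a total homomorphism of $\struct{V}$; that is, $\hom(\struct{V},\struct{D})=\varprojlim_i S_i$.

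What remains --- and this is the heart of the matter --- is to show that this cofiltered limit of inhabited sets is itself inhabited. The naive ``limit of non-empty finite sets'' shortcut fails, because a compact object has only finitely many orbits but its underlying set, and hence $S_i\subseteq D^{V_i}$, may be infinite; moreover Compact~CSP guarantees only that $S_i$ has global support (is inhabited as a $G$-set), not that it possesses an equivariant global point, so one genuinely must work with all continuous functions rather than only the equivariant ones. The right move is to topologise: since $\struct{D}$ is a finite cardinal, $D^{V_i}$ is a compact Hausdorff (profinite) space, and $S_i$ is cut out by the closed conditions of totality and relation-preservation, hence is a non-empty closed, so compact, subspace, with continuous restriction maps. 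A cofiltered limit of non-empty compact Hausdorff spaces is non-empty, so $\varprojlim_i S_i$ is inhabited and the glued function $\bigcup_i f_i$ furnishes the required homomorphism. I expect this compactness step to be the principal obstacle, both because it is the point at which the argument consumes genuine set-theoretic strength --- it is precisely the finite-intersection-property argument that is the meta-theoretic shadow of the Boolean prime ideal theorem to which Axiom~CSP is equivalent in Boolean toposes by Theorems~\ref{t:csp:to:bpit} and~\ref{t:bpit:to:csp} --- and because one must verify that inhabitation of $\varprojlim_i S_i$ really witnesses the internal validity of ``there exists a homomorphism $\struct{V}\to\struct{D}$'' in $\cont{G}$, i.e.~that the glued element legitimately names a point of the internal exponential.
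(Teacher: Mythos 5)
Your opening moves coincide with the paper's: decompose $\struct{V}$ into orbits, identify the compact subobjects with the finite unions of orbits, check that joint totality passes down to these pieces, and invoke Compact~CSP there. The gap is the gluing step, and it is not the deferred ``verification'' you mention at the end --- it is exactly where the proof breaks. The element of $\varprojlim_i S_i$ produced by the compact-Hausdorff argument is a bare function $\mor{f}{V}{D}$ in $\catw{Set}$: nothing in that argument forces it to be equivariant, nor even to have open stabiliser. But validity of Axiom~CSP in $\cont{G}$ requires a witness of the \emph{internal} existential ``there exists a homomorphism from $\struct{V}$ to $\struct{D}$'', and such witnesses range over the internal exponential $D^V$ of $\cont{G}$, whose elements are precisely the functions equivariant with respect to some open subgroup of $\group{G}$; the paper's own proof in fact produces more, namely an equivariant homomorphism, i.e.\ a global point of the homomorphism object. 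A cofiltered limit of non-empty sets of \emph{arbitrary} solutions has no reason to contain such an element. Indeed, $\varprojlim_i S_i$ is a non-empty compact Hausdorff space on which $\group{G}$ acts, and what you would need is a fixed point of that action --- precisely what extreme amenability would supply, and precisely what cannot be assumed in this theorem.

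There is a clean reductio showing the step cannot be patched within your scheme: your argument never uses Compact~CSP essentially, since non-emptiness of each bare solution set $S_i$ already follows from joint totality together with the same compactness of $D^{V_i}$ (your argument even runs verbatim with $\struct{V}$ in place of $\struct{V}_i$, with no decomposition at all). So if the gluing step were sound, Axiom~CSP would hold in every $\cont{G}$ unconditionally; by Theorem~\ref{t:csp:to:bpit}, BPIT would then hold in every $\cont{G}$, contradicting the characterisation theorem together with Theorem~\ref{t:zfa:axiom} --- e.g.\ for $\struct{N}$ the naturals with equality, whose automorphism group is not extremely amenable, so that Axiom~CSP and BPIT fail in the corresponding toposes. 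The missing idea, which is the paper's actual device, is to build equivariance into the constraints \emph{before} compactness is invoked: regard $V$ as a classical structure over the signature extended by the relations $T_g(a,b) \Leftrightarrow g \bullet a = b$, interpreted on $D$ (which carries the trivial action) as equality. Restrictions of the \emph{equivariant} solutions on orbit-finite pieces supplied by Compact~CSP satisfy these extra constraints (restrictions of merely finitely supported solutions would not), and this yields joint totality of the extended classical CSP --- the one place where Compact~CSP is genuinely needed. Axiom~CSP in $\catw{Set}$ then produces a solution preserving every $T_g$, hence constant on orbits, hence equivariant, and such a solution does witness the existential in $\cont{G}$.
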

\begin{proof}
An object $A$ in $\cont{G}$ can be represented as a disjoint union of its orbits $A_x \in A/\group{G}$. By the definition of compactness for every finite set of orbits $F \subseteq  A/\group{G}$ the object $\bigcup F$ is compact.

Let us assume that every finite subset of $A$ has a solution, by  Axiom~CSP for compact objects every $\bigcup F$ has an equivariant solution. Moreover, if $F' \subset F$ then every equivariant solution of $F$ can be restricted to an equivariant solution of $F'$. Observe that $A \in \cont{G}$ can be regarded as a classical structure over a signature extended by relations $T_g(a, b) \Leftrightarrow a \circ g = b$. Then by Axiom~CSP for $\catw{Set}$, the classical structure over this extended signature has a solution. But, by definition of $T_g$, such solution must be equivariant.   
\end{proof}


We can summarize the above characterisations in the next theorem.

\begin{theorem}[Characterisation theorem]
Let $\struct{A}$ be a countable structure. Then the following are equivalent:
\begin{enumerate}
\item $\struct{A}$ is the Fraisse limit of a Fraisse order class with the Ramsey property
\item $\aut{\struct{A}}$ is extremely amenable
\item ZFA-Axiom~CSP holds in $\catw{ZFA}(\struct{A})$
\item Axiom~DEF-CSP holds in $\catw{ZFA}(\struct{A})$
\item Axiom~CSP holds in $\catw{ZFA}(\struct{A})$
\item Axiom~CSP holds in $\cont{\aut{A \sqcup \word{A_0}}}$ for every finite $A_0 \subset A$
\end{enumerate}
\end{theorem}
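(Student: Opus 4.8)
The plan is to treat this as an assembly theorem: nearly every arrow is already available from the earlier sections, so I would organise the six statements around the single hub of the Ramsey property of $\struct{A}$ (equivalently, extreme amenability of $\aut{\struct{A}}$) and then verify the one genuinely new link. First I would record the two equivalences that come straight off the shelf. The equivalence of (1) and (2) is Proposition~4.7 of \cite{kechris2005fraisse}: for a countable structure, extreme amenability of the automorphism group coincides with being the Fraisse limit of a Fraisse order class with the Ramsey property. The equivalence of (2) and (3) is precisely Theorem~\ref{t:zfa:axiom}, whose proof already factors through the Ramsey property (via Theorem~2 of \cite{blass1986prime}) and the Kechris--Pestov--Todorcevic dictionary.

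Next I would handle the three topos-level formulations (5), (6) and their tie to (3). In each $\cont{\aut{A \sqcup \word{A_0}}}$ with $A_0$ finite we are in a Boolean topos, so Theorems~\ref{t:csp:to:bpit} and \ref{t:bpit:to:csp} give Axiom~CSP $\Leftrightarrow$ BPIT there. Since the $A_0$-equivariant fragment of $\catw{ZFA}(\struct{A})$ is equivalent to $\cont{\aut{A \sqcup \word{A_0}}}$, the transfer principle of \cite{licsMRP} identifies the conjunction over all finite $A_0$ of Axiom~CSP in $\cont{\aut{A \sqcup \word{A_0}}}$ (statement (6)) with ZFA-Axiom~CSP in $\catw{ZFA}(\struct{A})$ (statement (3)). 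For (5) versus (6), the literal (symmetric) translation of Axiom~CSP into $\catw{ZFA}(\struct{A})$ is \emph{a priori} weaker than the equivariant form, but the inspection of Halpern's and Blass's constructions already noted in Section~\ref{s:zfa} shows that the homomorphism witnessing joint totality produced when BPIT holds is itself $A_0$-equivariant; hence (5) and (6) coincide.

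It remains to splice in the original Axiom~DEF-CSP, statement (4), which is the restriction of ZFA-Axiom~CSP to \emph{definable} (finite-orbit) sets. The implication (3)$\Rightarrow$(4) is immediate: a definable CSP has a variable set $V$ with finitely many orbits and a finite domain $D$, so any $A_0$-equivariant solution $\mor{S}{V}{D}$ has each fibre $S^{-1}(d)$ equal to a union of orbits of $V$, of which there are only finitely many, whence the solution is automatically definable.

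The main obstacle is the reverse direction, the \emph{reversal} promised in the abstract, for which I would prove (4)$\Rightarrow$(2) by contraposition. Assuming $\aut{\struct{A}}$ is not extremely amenable, the failure of the Ramsey property yields a finite colouring of some $\aut{\struct{A}}/\group{H}$ with no monochromatic translate of a prescribed finite set, and the delicate point is to package this failure as a genuinely \emph{definable} CSP --- a definable Boolean algebra together with a definable ideal, in the style of \cite{halpern1964independence,blass1986prime} --- that has a solution but no definable solution. Carrying the counterexample down from arbitrary equivariant data to finite-orbit data is the only step that is not a formal consequence of the earlier sections, and it is where I would concentrate the effort; once it is in place, the cycle (2)$\Rightarrow$(3)$\Rightarrow$(4)$\Rightarrow$(2), together with (1)$\Leftrightarrow$(2) and (3)$\Leftrightarrow$(6)$\Leftrightarrow$(5), closes all six equivalences.
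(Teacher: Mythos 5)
Most of your arrows are sound and agree with the paper's own assembly: (1)$\Leftrightarrow$(2) is Proposition~4.7 of \cite{kechris2005fraisse}, (2)$\Leftrightarrow$(3) is Theorem~\ref{t:zfa:axiom}, (3)$\Leftrightarrow$(6) is the identification of the $A_0$-equivariant fragment of $\catw{ZFA}(\struct{A})$ with $\cont{\aut{\struct{A} \sqcup \word{A_0}}}$, and your argument for (3)$\Rightarrow$(4) --- an equivariant solution of a finite-orbit CSP over a finite domain has a finite-orbit graph, hence is definable --- is correct. The genuine gap is the converse (4)$\Rightarrow$(3) (your (4)$\Rightarrow$(2)), which is the entire point of the theorem: you explicitly defer it (``the delicate point is to package this failure as a genuinely definable CSP \dots it is where I would concentrate the effort'') and never carry it out, so your cycle does not close. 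Moreover, the contrapositive route you sketch is not merely unfinished but genuinely problematic: the failure of the Ramsey property yields, via \cite{blass1986prime}, an \emph{equivariant} Boolean algebra with no equivariant prime ideal, and the CSP it induces (variables the carrier, domain $2$) is equivariant but there is no reason for it to be \emph{definable} --- the algebras arising in the Halpern/Blass constructions are quotients of free Boolean algebras on infinite sets and have infinitely many orbits --- so it witnesses a failure of ZFA-Axiom~CSP, not of Axiom~DEF-CSP. Cutting such a counterexample down to finitely many orbits is precisely the hard content you would have to supply, and nothing in the earlier sections provides it.

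The paper closes this gap in the opposite, counterexample-free direction, using Theorem~\ref{t:compactcsp:csp}, which your proposal never invokes. Definable sets are exactly the compact objects (finite unions of orbits) of $\cont{\aut{\struct{A} \sqcup \word{A_0}}}$, so Axiom~DEF-CSP is Compact~CSP; and Compact~CSP already implies full Axiom~CSP in a topos of continuous $\group{G}$-sets: given an equivariant $\struct{V}$ whose finitary homomorphisms into a finite $\struct{D}$ are jointly total, Compact~CSP produces an equivariant solution on every finite union of orbits; regarding $\struct{V}$ as a classical structure in $\catw{Set}$ over the signature enlarged by the relations $T_g(a,b) \Leftrightarrow g \bullet a = b$, these partial solutions make the enlarged instance finitely satisfiable, so Axiom~CSP in $\catw{Set}$ (classical CSP compactness) yields a total solution, which is automatically equivariant because it preserves every $T_g$. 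With this theorem, (4)$\Rightarrow$(3) is a formal consequence, and the hard direction of the characterisation is obtained without any contraposition or any definable counterexample. You should either adopt this argument or actually construct the finite-orbit counterexample your plan requires; the latter is substantially harder and is work the paper never needs to do.
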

\begin{proof}
$(1) \Leftrightarrow (2)$ is Proposition~4.7 in \cite{kechris2005fraisse}. $(2) \Leftrightarrow (3)$ is the subject of Theorem~\ref{t:zfa:axiom}. $(3) \Leftrightarrow (4)$ is the consequence of Theorem~\ref{t:compactcsp:csp}. $(3) \Leftarrow (5)$ is trivial, and $(2) \Rightarrow (5)$ follows from Theorem~\ref{t:csp:to:bpit} and the main theorem of \cite{blass2011partitions}. $(3) \Leftrightarrow (6)$ is trivial.
\end{proof}

\section{The case of non-Boolean toposes}
\label{s:non:boolean:toposes}

When we move to non-Boolean toposes, we have to be extra careful when stating classical definitions and axioms, because in constructive mathematics classically equivalent statements may be far different. Fortunately for us, the concept of Boolean algebra, ideal and prime ideal move smoothly to the intuitionistic setting with one caveat: not every maximal ideal in a Boolean algebra has to be prime.

On the other hand, Axiom~CSP is much more difficult to handle in the intuitionistic setting. Actually, we have several different variants of Axiom~CSP depending on our interpretation of ``finiteness'' and admissible relational structures. Therefore, we should not expect that Axiom~CSP is equivalent to BPIT in constructive mathematics, because BPIT does not involve any notion of finiteness and there is not much concern about admissibility of Boolean algebra operations (however, we could take this into account). In general, the stronger the notion of ``finiteness'' and ``admissibility'' is, the stronger Axiom~CSP we obtain.

Let us discuss some possible definitions for an admissible structure $\struct{A}$:
\begin{enumerate}
\item Only complemented relations $R$ are admissible. That is, subobjects $\mor{s}{R}{A^k}$ such that there exists a subobject $\mor{\neg s}{R}{A^k}$ with the property that $s \cup \neg s = \id{A^k}$ and $s \cup \neg s = 0$.
\item $A$ is decidable. That is, the sobobject $\mor{\Delta}{A}{A \times A}$ that correspond to the equality predicate is complemented. Because, we assume that equality is always presented in the signature, decidability of $A$ is subsumed by the previous point.
\item All relations are admissible.  
\end{enumerate}

In the next subsections we discuss Axiom~CSP with respect to two internal notions of ``finiteness'': Kuratowski finiteness from Definition~\ref{d:finiteness} and Kuratowski subfiniteness (i.e.~being a subobject of a Kuratowski finite object).


\subsection{Kuratowski finiteness is too strong}

Consider Sierpienski topos $\catw{Set}^{\bullet \rightarrow \bullet}$. It is a routine to check that BPIT holds in $\catw{Set}^{\bullet \rightarrow \bullet}$, but Axiom~CSP does not hold even in case ``finiteness'' is interpreted as Kuratowski finiteness and only complemented relations $R$ are admissible. For a counterexample consider structures from Figure~\ref{f:csp:sierpienski}. The structure on the right side is the terminal object $1$ equipped with the empty unary relation $0 \rightarrow 1$. The structure on the left side is the only non-trivial subobject $\frac{1}{2}$ of $1$ equipped with the full unary relation $\frac{1}{2} \overset{\id{}}\rightarrow \frac{1}{2}$. There is a unique morphism $!$ from $\frac{1}{2}$ to $1$, but it is not a homomorphism, since it does not preserve the unary relation, i.e.~$! \circ \id{\frac{1}{2}} = \frac{1}{2} \neq 0$. On the other hand, the only Kuratowski finite subobject of $\frac{1}{2}$ is $0$ and the object of homomorphisms $\hom(0, 1)$ is isomorphic to $1$.

This example shows that Axiom~CSP with Kuratowski finiteness is too strong to be provable from BPIT, and too strong in general. 
If we weaken Axiom~CSP by weakening the notion of finiteness to Kuratowski subfiniteness then Axiom~CSP will hold even if all relations are admissible. The reason is that a structure $\mor{\struct{X}}{\struct{A}}{\struct{B}}$ in $\catw{Set}^{\bullet \rightarrow \bullet}$ can be encoded as a structure $\struct{A} \sqcup \struct{B}$ in $\catw{Set}$ with one additional relation encoding the graph of function $X$, i.e.~$R(x, y) \Leftrightarrow X(x) = y$ and one unary relation to distinguish domain from the codomain, i.e.~$S(x, y) = A \times A \sqcup B\times B$. Then for every finite substructure $A_0 \sqcup B_0$ of $\struct{A} \sqcup \struct{B}$ there is a finite substructure $A_0 \sqcup B_0 \subset A_1 \sqcup B_1 \subseteq \struct{A} \sqcup \struct{B}$ corresponding to a Kuratowski subfinite substructure of $\struct{X}$. Therefore, Axiom~CSP holds in $\catw{Set}^{\bullet \rightarrow \bullet}$ by the Axiom~CSP in $\catw{Set}$.

\subsection{Kuratowski subfiniteness is too weak}

Consider topos $\catw{Set}^{\bullet \leftarrow \bullet \rightarrow \bullet}$. Figure~\ref{f:csp:nonextreme} shows an example of a Boolean algebra, which does not have a prime ideal. Moreover, this example explicitly shows why we cannot carry over our proof of Theorem~\ref{t:csp:to:bpit} to constructive mathematics --- the Boolean algebra under consideration is Kuratowski finite, what means that in $\catw{Set}^{\bullet \leftarrow \bullet \rightarrow \bullet}$ not every finite Boolean algebra has a prime ideal. On the other hand, Axiom~CSP with Kuratowski finite subobjects fails and with Kuratowski subfinite subobjects holds for the same reasons as in the Sierpienski topos $\catw{Set}^{\bullet \rightarrow \bullet}$. 
Therefore, example from Figure~\ref{f:csp:nonextreme} shows that Axiom~CSP with Kuratowski subfiniteness is too weak to prove BPIT.

\begin{figure}[tb]
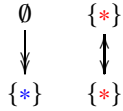


$$\bfig
\node a1(0, 0)[\{\textcolor{blue}{*}\}]
\node b1(300, 0)[\{\textcolor{red}{*}\}]
\node a2(0, 300)[\emptyset]
\node b2(300, 300)[\{\textcolor{red}{*}\}]

\arrow|r|/->>/[a2`a1;]
\arrow|r|/<->/[b2`b1;]

\efig$$
\caption{Axiom~CSP fails in $\catw{Set}^{\bullet \rightarrow \bullet}$ for Kuratowski finiteness. The structures are equipped with a single unary relation that holds on blue elements only.}
\label{f:csp:sierpienski}

\end{figure}

\begin{figure}[tb]
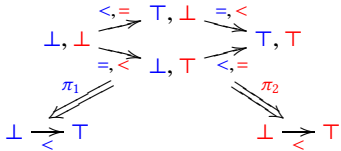

$$\bfig
\node bb(200, 400)[\textcolor{blue}{\bot}, \textcolor{red}{\bot}]
\node bt(600, 300)[\textcolor{blue}{\bot}, \textcolor{red}{\top}]
\node tb(600, 500)[\textcolor{blue}{\top}, \textcolor{red}{\bot}]
\node tt(1000, 400)[\textcolor{blue}{\top}, \textcolor{red}{\top}]

\node b1(0, 50)[\textcolor{blue}{\bot}]
\node t1(250, 50)[\textcolor{blue}{\top}]
\arrow|l|/@{->}@[blue]/[b1`t1;\textcolor{blue}{<}]

\node b2(950, 50)[\textcolor{red}{\bot}]
\node t2(1200, 50)[\textcolor{red}{\top}]
\arrow|l|/@{->}@[red]/[b2`t2;\textcolor{red}{<}]

\node f1(125, 100)[]
\node f2(400, 250)[]
\arrow|l|/@{=>}@[blue]/[f2`f1;\textcolor{blue}{\pi_1}]

\node f3(1025, 100)[]
\node f4(800, 250)[]
\arrow|r|/@{=>}@[red]/[f4`f3;\textcolor{red}{\pi_2}]

\arrow|l|/->/[bb`tb;\textcolor{blue}{<}, \textcolor{red}{=}]
\arrow|l|/->/[bb`bt;\textcolor{blue}{=}, \textcolor{red}{<}]
\arrow|r|/->/[bt`tt;\textcolor{blue}{<}, \textcolor{red}{=}]
\arrow|r|/->/[tb`tt;\textcolor{blue}{=}, \textcolor{red}{<}]
\efig$$
\caption{An example of a non-trivial finite Boolean algebra in $\catw{Set}^{\bullet \leftarrow \bullet \rightarrow \bullet}$ that has no prime ideal.}
\label{f:csp:nonextreme}
\end{figure}

\section{Conclusions and further work}
\label{s:conclusions}
In this paper we have given a simple purely-logical proof of ZFA-Axiom~CSP in the ordered Fraenkel-Mostowski model without using any advanced results from topology and model theory. Moreover, we have introduces an intrinsic characterisation of the statement ``definable CSP has a solution iff it has a definable solution'' and investigate it in general toposes. It turns out that in Boolean toposes this axiom is equivalent to Boolean prime ideal theorem, whereas in intuitionistic toposes there is no such an equivalence, nor an implication in either of the directions. It is an interesting question which positive-existential theories have classifying toposes validating Axiom~CSP; or more generally, in which Grothendieck toposes does Axiom~CSP hold. Finally, we reversed the main result of \cite{DBLP:conf/lics/KlinKOT15} by showing that Axiom DEF-CSP holds in  $\catw{ZFA}(\struct{A})$ if and only if $\aut{\struct{A}}$ is extremely amenable.    

%
\begin{acks} 
This research was supported by the National Science Centre, Poland, under projects 2018/28/C/ST6/00417.
\end{acks}

\bibliography{bibl}


\end{document}